\newcolumntype{P}[1]{>{\hspace{0pt}}p{#1}}
\newcolumntype{K}[1]{>{\hspace{0pt}}X{#1}}
\newcolumntype{L}[1]{>{\raggedright\let\newline\\\arraybackslash\hspace{0pt}}m{#1}}
\newcolumntype{R}[1]{>{\raggedleft\let\newline\\\arraybackslash\hspace{0pt}}m{#1}}
\newcolumntype{C}{>{\centering\arraybackslash}X}
\titleformat{\paragraph}[hang]{\normalfont\normalsize}{\theparagraph}{1em}{}
\titlespacing*{\subsubsection}{0pt}{1.5ex plus 1ex minus .2ex}{0.75em}
\titlespacing*{\paragraph}{0pt}{1.5ex plus 0.75ex minus .2ex}{1em}
\newcommand{\markZwicky}[1][]{\pgfutil@ifnextchar({\mark@Zwicky{#1}}{\mark@Zwicky{#1}()}}
\def\mark@Zwicky#1(#2)#3{%
   \tikz[every Zwicky picture,#1]{%
     \node[every Zwicky node,draw=none,inner sep=+\z@,outer sep=+\z@] {#3};
     \def\tikz@Mark@name{#2}%
     \ifx\tikz@Mark@name\pgfutil@empty\else
       \tikzset{every Zwicky node/.append style={name={#2}}}%
     \fi
     \node[every Zwicky node,overlay] {\phantom{#3}};
   }%
}
\newcommand{\tikzZwicky}[1][]{%
  \def\tikz@Zwicky@args{#1}%
  \let\tikz@Zwicky@list\pgfutil@gobble
  \let\tikz@Zwicky@first\pgfutil@empty
  \pgfutil@ifnextchar(\tikz@Zwicky@collect\tikz@Zwicky@finish
}
\def\tikz@Zwicky@collect(#1){%
  \ifx\tikz@Zwicky@first\pgfutil@empty
    \edef\tikz@Zwicky@first{#1}%
  \else
    \edef\tikz@Zwicky@list{\tikz@Zwicky@list,#1}%
  \fi
  \pgfutil@ifnextchar(\tikz@Zwicky@collect\tikz@Zwicky@finish
}
\def\tikz@Zwicky@finish{%
  \tikz[remember picture,overlay]
    \draw[every Zwicky connector,/expanded=\tikz@Zwicky@args]
      (\tikz@Zwicky@first) [/expanded={@Zwicky@list/.list={\tikz@Zwicky@list}}] [every Zwicky connect finish/.try];
}
\tikzset{
  @Zwicky@list/.style={insert path={to[every Zwicky connector how/.try] (#1)}},
  every Zwicky picture/.style={
    baseline,
    remember picture,
  },
  every Zwicky node/.style={
    remember picture,
    anchor=base,
    inner sep=+2pt
  },
  every Zwicky connector/.style={
    ultra thick,
    red!80!black,
    draw opacity=0.2,
    line cap=round,
    line join=round
  }
}
\def\@IEEEsectpunct{.\ \,}
\def\paragraph{\@startsection{paragraph}{4}{\z@}{1.5ex plus 1.5ex minus 0.5ex}%
{0ex}{\itshape}}
\newtheorem{theorem}{Theorem}[section]
\DeclarePairedDelimiter{\ceil}{\lceil}{\rceil}
\begin{document}

\pagestyle{plain}
\pagenumbering{arabic}

\begin{center}
    \LARGE \textbf{Grover Search for Portfolio Selection}
\end{center}

\begin{center}
A. Ege Yilmaz\footnote{Hochschule Luzern, Institut f\"ur Finanzdienstleistungen Zug IFZ, Suurstoffi~1, 6343 Rotkreuz, Email: <\href{mailto:ahmetege.yilmaz@hslu.ch}{ahmetege.yilmaz@hslu.ch}>}, Stefan Stettler\footnote{Abraxas Informatik AG, The Circle 68, 8058 Zürich, Phone: +41 58 660 16 85, Email: <\href{mailto:stefan.stettler@inventx.ch}{stefan.stettler@abraxas.ch}>}, Thomas Ankenbrand\footnote{Hochschule Luzern, Institut f\"ur Finanzdienstleistungen Zug IFZ, Suurstoffi~1, 6343 Rotkreuz, Phone: +41 41 757 67 23, Email: <\href{mailto:thomas.ankenbrand@hslu.ch}{thomas.ankenbrand@hslu.ch}>} and Urs Rhyner\footnote{Inventx AG, The Circle 37, 8058 Zürich, Phone: +41 81 287 19 79, Email: <\href{mailto:urs.rhyner@inventx.ch}{urs.rhyner@inventx.ch}>}
\bigskip
\noindent
\\
\bigskip

\today

\end{center}

\begin{abstract}
    We present explicit oracles designed to be used in Grover's algorithm to match investor preferences. Specifically, the oracles select portfolios with returns and standard deviations exceeding and falling below certain thresholds, respectively. One potential use case for the oracles is selecting portfolios with the best Sharpe ratios. We have implemented these algorithms using quantum simulators.
\end{abstract}

\section{Introduction}
Consider that the efficient frontier of a portfolio universe is calculated and presented to an investor, who would like to choose portfolios from it. The efficient frontier would consist of two lists of numbers - one containing the expected returns and the other containing the corresponding standard deviations of the optimal portfolios. Our implementation employs quantum algorithms to enable selection from optimal portfolios with specified risks and returns.

The goal of portfolio optimization is to find the asset allocations, which result in optimal portfolios with maximum expected returns for given risk levels or minimal risk for a given level of expected return.  Portfolios could be derived using the mean-variance method \cite{Markowitz}, where variances of asset prices are used as a measure of portfolio risk. The objective constructed in this scheme yields a nondominated set, which is called ``the efficient frontier''. It allows the investors to choose among the optimal (efficient) portfolios based on their preferences. To identify portfolios with preference (utility function) maximizing risk-return pairs, we utilize two quantum algorithms - the Quantum Exponential Search (QES) algorithm \cite{QES} and the Grover Adaptive Search (GAS) algorithm \cite{GAS}. Both algorithms leverage the quantum search algorithm called Grover's algorithm \cite{grover1996fast} to locate desired items in a list of items. A general application of quantum computing in finance is discussed in \cite{albareti}.

This paper shows how to construct oracles (see Section \ref{subsec:Grover}) to select items in a list, based on their values and applies the resulting algorithms to a financial use case. Here, the items are varying portfolio allocations in a given asset universe with their corresponding risks and returns.
In Section \ref{sec:lit_rev}, we name the sources, which are related to our work. Although this study concentrates on the implementation of a conditional search algorithm in the context of investment portfolios, we also outline the literature regarding quantum portfolio optimization. A background on the relevant quantum algorithms is given in Section \ref{sec:method}. Our hypothesis is presented in Section \ref{sec:Implementation}, along with the discussion of the implemented algorithms. Lastly, our experimental results are presented in Section \ref{sec:results}.

\section{Literature Review}\label{sec:lit_rev}

Grover's algorithm is based on amplitude amplification \cite{Brassard,Grover_1998}, where the probability of measuring the desired quantum states is amplified. Using this, QES returns one of the desired states after an expected number of $\mathcal O(\sqrt {N/M})$ calls to the oracle, where $M$ and $N$ are the number of desired states and the total number of possible states, respectively. Another example of Grover-based algorithms is the GAS algorithms. These are optimization algorithms, which utilize Grover search in an iterative scheme to find the optimum value of an objective function. The relevant GAS algorithm that we use in our work is \cite{GAS}. Including QES as a subroutine, GAS finds the minimum among the items in an unsorted table of size $N$, also after $\mathcal{O}(\sqrt N)$ calls to the oracle. An extension of GAS to the $k$-minima problem is investigated in \cite{k_minima}. The algorithms QES and GAS allow us to select portfolios from a list of optimal portfolios. We would like to stress that this approach differs from the following literature, where the calculation of efficient portfolios is studied.

An instance of GAS in the context of portfolio optimization is \cite{Gilliam_2021}, where oracles for Constrained Polynomial Binary Optimization (CPBO) problems using GAS are constructed and tested on IBM's gate-based hardware. A subclass of CPBO are Quadratic Unconstrained Binary Optimization (QUBO) problems, which the portfolio optimization problem falls into.
Since QUBO is the natural input of quantum annealers, a significant portion of quantum portfolio optimization implementations use annealers \cite{gomes2022empirical,lopez2015multi,palmer2021quantum,pop_dwave,Venturelli_2019}.

In \cite{lim2022quantum} a quantum version of the portfolio optimization algorithm by \cite{helmbold1998line} is given. Their quantum algorithm achieves a quadratic speed-up in the time complexity with respect to the number of assets in the portfolio due to quantum state preparation and norm estimation with the assumptions of no short selling and quantum query access to asset returns. The latter describes having access to the desired state vectors by means of a certain quantum operation. Experimental results with 14 assets from S\&P500 using Honeywell's trapped-ion System Model H1 are shown in \cite{yalovetzky2021nisq}. Their algorithm is based on the hybrid HHL algorithm \cite{lee2019hybrid}, where the phase estimation is enhanced by mid-circuit measurement, quantum conditional logic, and qubit reset and reuse. In a follow-up paper, the authors carry out the constrained portfolio optimization on the same hardware, where they utilize the Quantum Zeno Effect in Quantum Approximate Optimization Algorithm and Layer Variational Quantum Eigensolver. By means of repeated projective measurements leading to Zeno dynamics, they achieve a higher in-constraint probability compared to the state-of-the-art technique of enforcing constraints by introducing a penalty into the objective. Details on some of the theoretical results regarding quantum portfolio optimization \cite{SOCP,HHL_port_opt} can be found in \cite{albareti}.
\section{Preliminaries}\label{sec:method}

In this section, the relevant concepts are introduced and the notation is set.

\subsection{Grover's search algorithm}\label{subsec:Grover}

Given an $N$ item unstructured search problem with $M$ solutions, Grover's algorithm finds the solutions with $\mathcal{O}(\sqrt N)$ calls to the oracle \cite{nielsenchuang}, where $M \in \mathbb{N}$ and for the sake of convenience, we set $N=2^n$ for an $n \in \mathbb{N}^+$. We also assume that $M \leq N/2$, since this can always be ensured by introducing an additional qubit to the system that doubles the size of the search space. At each Grover iteration, the initial state is rotated by an angle $\theta$ towards the solution state $\ket{\beta}$ by means of the Grover operator 
\begin{equation}\label{grover_op}
    G = \begin{pmatrix}
        \cos\theta & -\sin\theta\\
        \sin\theta & \cos\theta
    \end{pmatrix}
\end{equation}
on the plane spanned by the basis states $\ket{\beta}$ and $\ket{\alpha} \coloneqq \ket{\beta}^\perp$. Here, the angle between $\ket{\psi} \coloneqq \ket{+}^n$ and $\ket{\alpha}$ is given by 
\begin{equation}\label{theta}
    \frac{\theta}{2} = \arcsin\sqrt{\frac{M}{N}}.
\end{equation}
\begin{figure}[!htbp]
     \centering
     \includegraphics[scale=0.15]{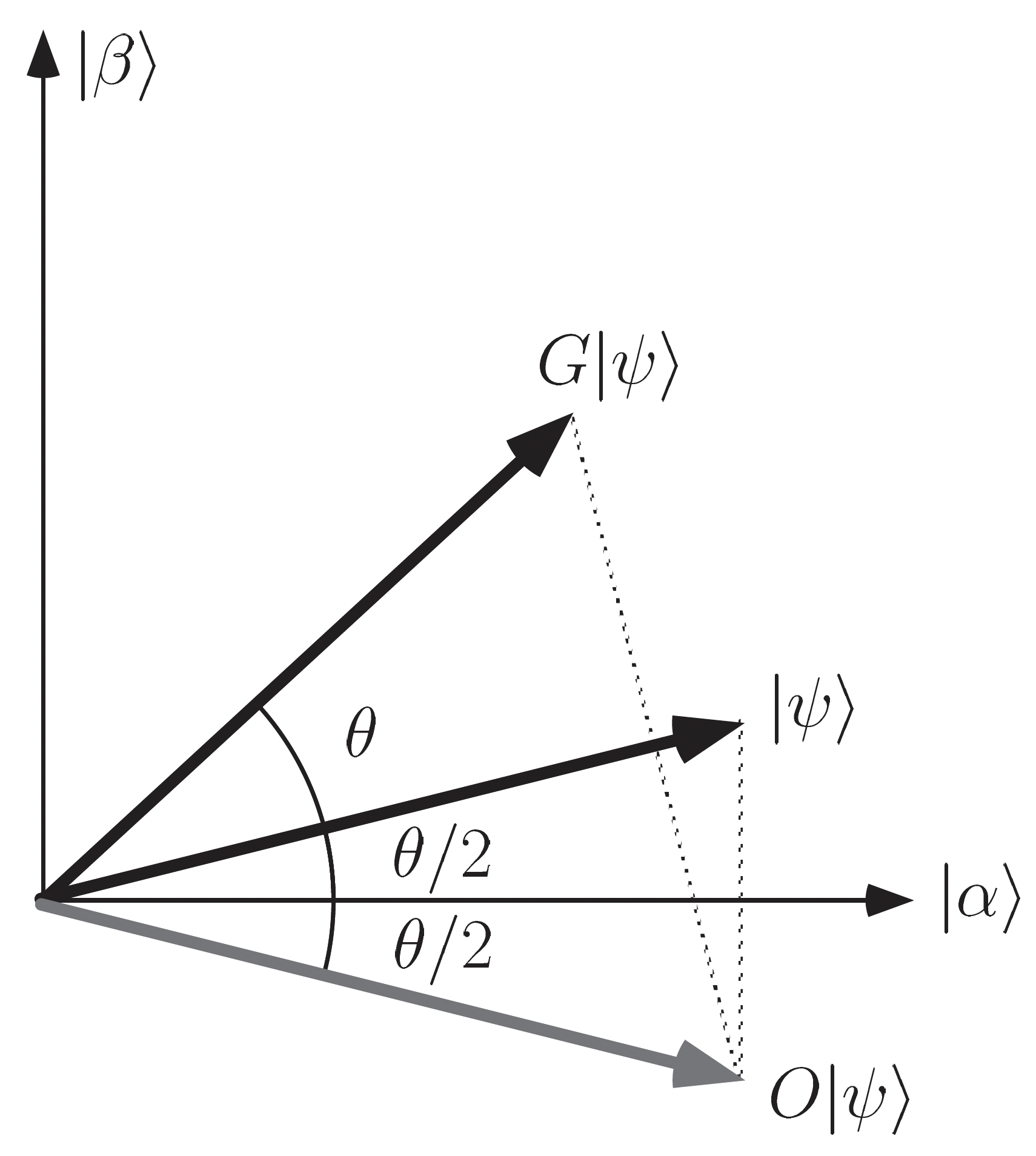}
     \caption{Geometrical illustration of the first Grover iteration. Source of image: \cite{nielsenchuang}.}
     \label{fig:grover}
\end{figure}

The quantum search algorithm is formulated in terms of an \emph{oracle}. An oracle is a `black box', that is able to `recognize' the solutions of a given problem. Upon input, the oracle $f$ outputs $1$ if the input is a solution to the problem we want to solve and $0$ otherwise. In quantum computing, an oracle is implemented as a unitary operator $O$ acting on the computational basis as
\begin{equation}\label{oracle}
    \ket{x}\ket{q} \xrightarrow{O} \ket{x}\ket{q\oplus f(x)},
\end{equation}
where $\oplus$ denotes addition modulo 2, $\ket{q}$ is the oracle qubit constituting the decision of the oracle and the input is registered in $\ket{x}$. If $\ket{q}$ is prepared in the $\ket{-}$ state, \eqref{oracle} becomes
\begin{equation*}
    \ket{x}\ket{-} \xrightarrow{O} (-1)^{f(x)} \ket{x}\ket{-},
\end{equation*}
where the solutions are marked by a phase kickback. This corresponds to reflecting the initial state about $\ket{\alpha}$. The complete Grover rotation \eqref{grover_op} can be achieved by applying a second reflection about $\ket{\psi}$ (see Figure \ref{fig:grover}).
The number of Grover rotations required to reach $\beta$ within an angle $\theta/2 \leq \pi/4$ is given by
\begin{equation*}
    \textnormal{CI}\left(\frac{\pi/2}{\theta} - \frac{1}{2}\right),
\end{equation*}
where $\textnormal{CI}(x)$ denotes the integer closest to the real number $x$. By convention, halves are rounded down. Measurement of the state then yields a solution to the search problem with a probability of at least
one-half.

\subsection{Quantum phase estimation}\label{subsec:QPE}
If the Grover search involves non-integer values, the decimal parts need to be encoded in the quantum circuits. One way to achieve this is to use quantum phase estimation. It estimates the strength $\varphi$ of a quantum phase $e^{i2\pi\varphi}$ in units of $2\pi$ to a specified accuracy, where $\varphi \in [0,1)$ and $e^{i2\pi\varphi}$ is an eigenvalue of a unitary operator $U$. In order to have an accuracy of $2^{-m}$ and a success probability of $1-\epsilon$, the number of phase estimating qubits must be
\begin{equation*}
    t = m + \log (2 + 1/2\epsilon).
\end{equation*}
The estimated phase strength $\widetilde\varphi$ is then represented as an integer $b$ in $t$ qubits following the relation
\begin{equation}\label{phase_est_integer_formula}
    b = 2^t\widetilde\varphi.
\end{equation}
The runtime of quantum phase estimation is $\mathcal{O}(t^2)$ \cite{nielsenchuang}.

\subsection{Quantum counting}\label{Approximate}

Grover's algorithm serves to find the solutions of a search problem with fewer consultations to the oracle. Classically, we would need $\mathcal{O}(N)$ consultations to estimate $M$. Utilizing Grover operators and quantum phase estimation, quantum counting can speed this process up. Moreover, it can be used to estimate whether there is a solution to the problem or not. 

The number of solutions $M$ and the Grover angle $\theta$ are related by \eqref{theta}. Hence, choosing the unitary operator $U$ mentioned in Section \ref{subsec:QPE} as the Grover operator \eqref{grover_op} with the eigenvalues $e^{\pm i\theta}$, quantum phase estimation yields an estimate of $M$. The error related to the estimation of $M$ is given by \cite{nielsenchuang}
\begin{equation}\label{error_M}
    \Delta M < 2^{-m} \left ( \sqrt{NM} + \frac{N}{4} \cdot 2^{-m} \right ).
\end{equation}

\subsection{Quantum Exponential Search}\label{QES}
QES is a generalization of Grover's algorithm, allowing us to use Grover's algorithm without prior knowledge of the number of solutions. After an expected number of $\mathcal O(\sqrt {N/M})$ calls to the oracle, a solution is returned almost certainly. Basically, it employs Grover's algorithm in a loop, where each time the upper bound of the range is increased, which the number of Grover iterations is uniformly randomly chosen from. The program is exited upon measuring a solution and the case of no solution is handled by an appropriate time-out. We state QES in Algorithm \ref{algo:QES} for the reader's convenience.
\begin{algorithm}
    \caption{QES}
    \begin{algorithmic}[1]
    \REPEAT
        \STATE $m \gets 1$
        \STATE $\lambda \gets 8/7$
        \STATE Choose integer $j$ uniformly at random with $0\leq j < m$
        \STATE Apply $j$ Grover iterations
        \STATE Measure $i$
         \IF{$f(i) = 1$}
            \STATE Exit
        \ELSE
            \STATE $m \gets$ min$(\lambda m , \sqrt{N})$
        \ENDIF
    \UNTIL{Program is exited}
    \end{algorithmic}
    \label{algo:QES}
\end{algorithm}

\subsection{Grover Adaptive Search}\label{GAS}
As mentioned in Section \ref{sec:lit_rev}, GAS is a Grover-based algorithm to find the minimum value in a list by iteratively applying Grover search and using the best-known value as a threshold to flag all values smaller than the threshold. It uses QES as a subroutine and returns the index of the minimum with a success probability of at least one-half after making $\mathcal{O}(\sqrt{N})$ calls to the oracle.  We state GAS in Algorithm \ref{algo:GAS} for the reader's convenience.
\begin{algorithm}
    \caption{GAS}
    \begin{algorithmic}[1]
    \STATE Choose integer $j$ uniformly at random with $0\leq j < N$
    \REPEAT
        \STATE Initialize memory as $\sum_i \frac{1}{\sqrt{N}}\ket{i}\ket{j}$
        \STATE Mark items whose values are smaller than the $j^{th}$ value
        \STATE Apply QES with outcome $j'$
        \STATE $j\gets j'$
    \UNTIL{Number of calls to the oracle has exceeded $22.5\sqrt{N}+1.4\log^2(N)$}
    \end{algorithmic}
    \label{algo:GAS}
\end{algorithm}
\section{Implementation}\label{sec:Implementation}

The research question initially aimed to investigate the feasibility of implementing portfolio optimization with continuous asset allocations and positivity constraints on gate-based quantum hardware or simulators. However, to the best of our knowledge, the only available literature on this topic proposes an algorithm that primarily focuses on theoretical runtime rather than practical implementation \cite{SOCP}. After an analysis that demonstrates the infeasibility of implementing the complete portfolio optimization problem with these criteria on existing quantum hardware or simulators, the research objective has been redefined to focus on selecting portfolios from the set of efficient portfolios that maximize investor preferences. In this context, the hypothesis posits that by utilizing Grover-based algorithms with suitable oracles, it is possible to identify portfolios that maximize investor preferences, assuming the risk-return pairs of the optimal portfolios are available as input. In the following, we assume that the investor preferences are maximized by selecting portfolios with returns and
standard deviations higher and lower than the desired threshold
values, respectively.

In this section, the gates and circuits required to select the preferred portfolios are presented. Once the efficient portfolios are generated by classical mean-variance optimization with no-short-selling condition, we select portfolios with returns and standard deviations higher and lower than desired threshold values, respectively. This is a conditional slicing problem with two lists and conditions. It is solved by employing QES with the oracles given in Section \ref{subsubsec:oracle}. The second part of our implementation deals with searching for the portfolio with the maximum Sharpe ratio, which is described in Section \ref{subsec:max_min}. Lastly, the scaling of the number of required qubits is analyzed in Section \ref{subsec:tot_no_qubits}.

\subsection{Conditional Slicing}
We explore the quantum version of selecting list items by a given condition, also known as (list) slicing by condition. Specifically, we are interested in conditions involving comparisons between non-negative values.

\subsubsection{GT-gate}\label{appendix:greater_than_gate}
In order to carry out the comparisons between values, a greater-than operation (GT-gate) is necessary. 
Our implementation of a GT-gate for the comparison $a>b$ is given by
\begin{equation}\label{GT-gate}
\begin{split}
    CX&(1,n+1)\cdots CX(n,2n)\\
    &MCX(n,2n,2n+1)X(2n) \\
    &MCX(n-1,2n-1,2n,2n+1)X(2n-1)\cdots \\
    &MCX(1,n+1,\dots,2n,2n+1)X(n+1)\\
    &X(n+1)\cdots X(2n)\\
    &CX(1,n+1)\cdots CX(n,2n)\ket{0}^{2n+1},
\end{split}
\end{equation}
where $a,b \in \mathbb{N}$. The last qubit carries the outcome of the comparison. The non-negative integers $a$ and $b$ are encoded bitwise in the first $n$ qubits and the remaining qubits, respectively. Note that the leading order bits have higher indices, i.e. they are located at lower positions in Circuit 1. The last two lines of \eqref{GT-gate} are for uncomputation.
\begin{figure}[!htbp]
     \centering
     \includegraphics[scale=0.28]{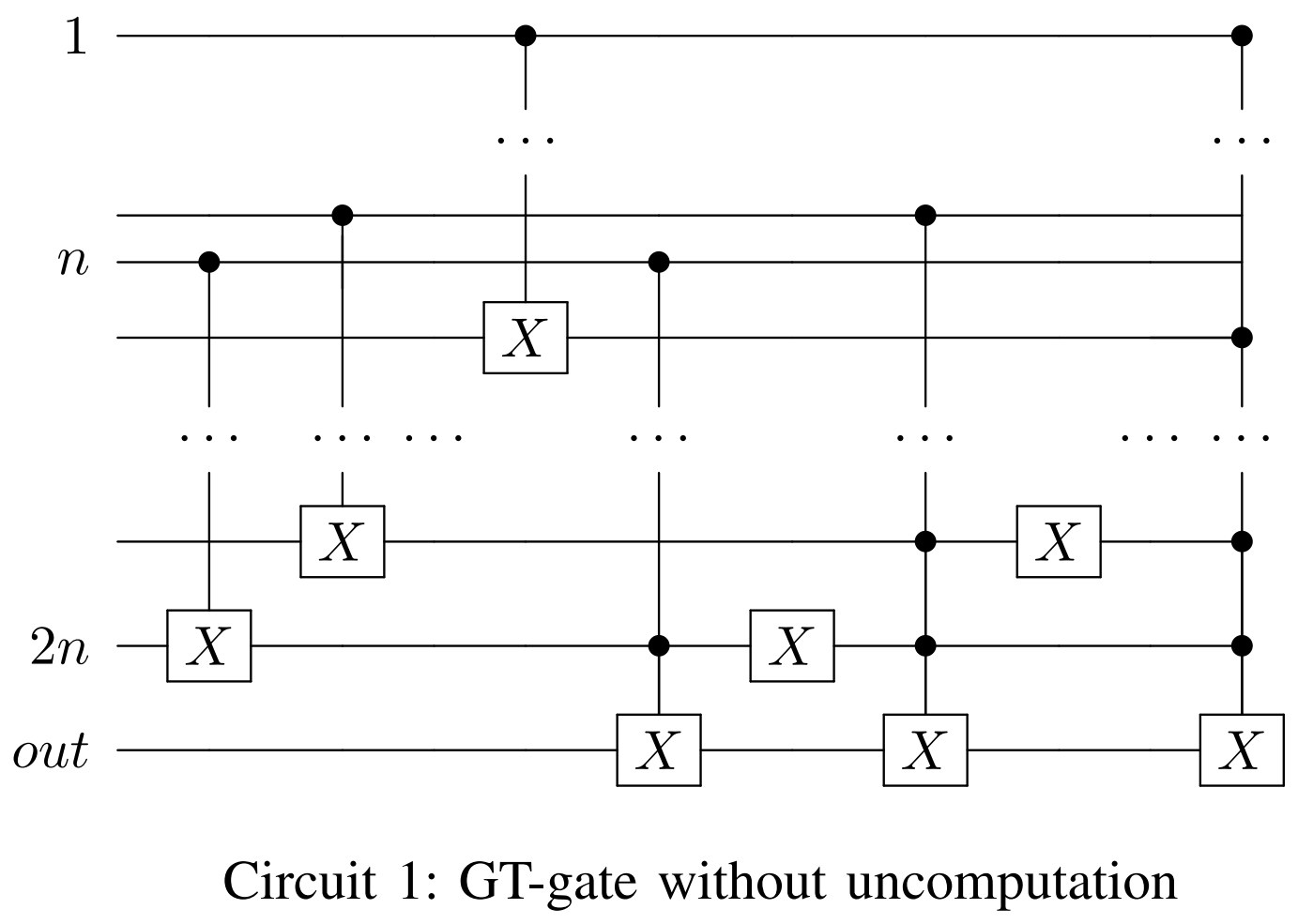}
\end{figure}
The circuit consists of NOT ($X$), controlled-NOT ($CX$) and multi-controlled-NOT ($MCX$) gates.
The number of all gates scales as $\mathcal{O}(n)$.

\subsubsection{Constructing Oracles}\label{subsubsec:oracle}

The quantum list slicing algorithm is based on Grover search. For a list with items $s_k$ and a threshold value $S$, the oracle must check each item for $s_k > S$, $k\in \{1,\dots,N\}$. In the case of non-integer values, the GT-gate must be applied to the integer part and the decimal part, separately. Since the comparison of integers with GT-gate is straightforward, we first concentrate on the decimal parts comparison. Assuming that we have access to the decimal parts in the form of 
\begin{equation}\label{U}
    U = diag(e^{i2\pi s_1},\dots,e^{i2\pi s_N}),
\end{equation}
they can be phase-encoded with the associated unitary operator $U$, resulting in integer representations of their $t$-bit approximations in the register, given by \eqref{phase_est_integer_formula}. This allows us to apply GT-gate on them.
In a similar manner, GT-gate can be applied to the integer parts by having access to their phase representations in the form of $\eqref{U}$.
For simplicity, we assume that all values are in the range $[0,1)$. A diagram of the initial state and the oracle is shown below:
\[
\begin{array}{c}
\Qcircuit @C=1.2em @R=1em {
\lstick{\ket{+}^n} & \gate{PE(U)} & \multigate{2}{GT} & \qw     &       &    \\
\lstick{\ket{S}}    & \qw           & \ghost{GT}        & \qw   &\cdots & \rotatebox{270}{Uncomputation} \\
\lstick{\ket{-}}    & \qw           & \ghost{GT}        & \qw   &       &    \\
} \\ \\
\textnormal{Circuit 2: Initial state and oracle. Single list.}
\end{array}
\]
$PE$ stands for phase estimation. The initial state (Grover eigenstate) is $\psi_0 = \ket{+}^n\ket{S}\ket{-}$.
This way, the first register carries the superposition of the list items. The second register has $S$. The comparison $s_k > S$ is realized by the GT-gate. Finally, the decision of the oracle is registered at the third register. 

Slicing by multiple conditions, e.g. 
\begin{equation*}
    S_2>s_k > S_1    
\end{equation*}
or
\begin{equation*}
    S_2>s_k \text{ AND } s_k = S_1,
\end{equation*}
can be done in an analogous way, where $S_1$ and $S_2$ are threshold values. An interesting case is when multiple conditions are based on different lists, such as
\begin{equation*}
    S_2>\sigma_k \text{ AND } r_k > S_1.
\end{equation*}
A diagram of the oracle for such a slicing is shown below:
\[
\begin{array}{c}
\Qcircuit @C=1.3em @R=0.5em {
\lstick{\ket{+}^n} & \ctrl{1} & \gate{PE(U_r)} & \multigate{2}{GT} & \qw & \qw & \\
\lstick{\ket{S_1}} & \qw & \qw & \ghost{GT} & \qw & \qw & \\
\lstick{\ket{0}} & \qw & \qw & \ghost{GT} & \ctrl{3} & \qw & \\
\lstick{\ket{S_2}} &\qw &  \qw & \multigate{2}{GT} & \qw & \qw & \cdots & \rotatebox{270}{Uncomputation}\\
\lstick{\ket{0}^n} & \sgate{X^n}{-3} & \gate{PE(U_\sigma)} & \ghost{GT} & \qw & \qw & \\
\lstick{\ket{0}} & \qw & \qw & \ghost{GT} & \ctrl{1} & \qw & \\
\lstick{\ket{-}} & \qw & \qw & \qw & \gate{X} & \qw & 
} \\ \\
\textnormal{Circuit 3: Initial state and oracle for two lists.}
\end{array}
\]
The unitary operators $U_r$ and $U_\sigma$ contain the list items $r_k$ and $\sigma_k$ in the previously given form \eqref{U}, respectively. 

A point to elaborate on is the resolving power of the GT-gate, when comparing phase-encoded values. The resolving power quantifies how
close two values can be, while the GT-gate can
successfully compare them. If two distinct values $\varphi_1,\varphi_2 \in [0,1)$ are represented by integers $b_1,b_2 \in \mathbb{N}$ according to \eqref{phase_est_integer_formula}, we would like to satisfy the separation condition $|b_1 - b_2| \geq 1$, which implies
\begin{equation}\label{gt_separation_rule}
    t \geq \log |\varphi_1 - \varphi_2|^{-1}.
\end{equation}
For a desired resolving power $ d$, \eqref{gt_separation_rule} implies that we can choose
\begin{equation}\label{t_G}
    t = \ceil[\bigg]{\log  d^{-1}}.
\end{equation}

\subsubsection{The Algorithm}\label{subsubsec:algorithm1}

The quantum conditional slicing is realized by inserting the initial states and the oracles presented in Section \ref{subsubsec:oracle} into QES. One of the solutions is returned with equal probability after an expected number of $\mathcal O(\sqrt {N/M})$ calls to the oracle. In case there are multiple solutions, one needs to run the algorithm multiple times to obtain them. By Theorem \ref{theorem2}, it can be checked whether there is no solution, a single solution or multiple solutions with quantum counting after $\mathcal{O}(\sqrt{N})$ calls to the oracle by choosing
\begin{equation}\label{m}
        m = \ceil[\bigg]{\frac{\log N}{2} + 1.583}
    \end{equation} 
and rounding the measured number of solutions to the closest integer. According to Theorem \ref{theorem1}, the exact number of solutions can be obtained after $\mathcal{O}(N)$ calls to the oracle by choosing
\begin{equation}\label{m_other}
    m = \ceil[\bigg]{\log N + \frac{1}{2}}.
\end{equation}
Then, the regular Grover's algorithm with the correct number of Grover iterations can also be used instead of QES, since the number of solutions is known.

\subsection{Finding the maximum}\label{subsec:max_min}

We can use the oracle in Circuit 2 from Section \ref{subsubsec:oracle} in GAS to get the maximum value in a list. At the first iteration, the threshold value in the initial state is prepared as the list value with the index that is chosen uniformly at random. At the later iterations, it is substituted by the outcome of QES. Implementation of a minimum function is analogous. After running GAS for $c$ times, the probability of success is at least $1-1/2^c$.

\subsection{Total number of qubits}\label{subsec:tot_no_qubits}
The number of qubits required to encode the list elements scales logarithmically with the number of elements in the list. Additionally, the choices \eqref{m} and \eqref{m_other} for $m$ show that the number of counting qubits scales as $\mathcal{O}(\log N)$. From \eqref{t_G} we see that the number of phase estimating qubits is independent of $N$, but dependent on the resolving power $ d$. Hence, the total number of qubits required by both algorithms scales as $\mathcal{O}\left(\log (N { d}^{-1})\right)$.
\section{Results}\label{sec:results}
    The circuits that are described in Section \ref{sec:Implementation} are implemented using the simulators from IBM's Qiskit SDK \cite{Qiskit}.
    Our results are accessible at \cite{QAIFgit}, where the corresponding repository includes other gates, such as Less Than, Equals, and OR gate, which enable different selection conditions. 
    
    The first part of our implementation shows that the desired portfolios with specified risks and returns are found by employing our oracles in QES. Note that we take here a subsample of the optimal portfolios resulting in eight values for returns and standard deviations each. 
 This way we reduce the number of qubits required to hold the values, which is too high in terms of the required memory, otherwise.
 In the second part of our implementation, Sharpe ratios of the portfolios are calculated, where a risk-free rate of $0\%$ is chosen for simplicity. Then, the index of the portfolio with the highest Sharpe ratio is successfully obtained by using Circuit 2 in GAS. Note that all values in both experiments are in $[0,1)$ and we do not need to compare integer parts. A resolving power $d=0.01$ is chosen for both experiments, resulting in fifteen qubits for GT-gate. Three qubits are required for encoding the indices of eight values. Hence, Circuit 2 and 3 have eighteen and thirty-seven qubits, respectively. For the case $M=0$, a time-out of $\mathcal{O}(\sqrt{N})$ is set in QES. Quantum counting with \eqref{m} can be used as a substitute to the handling of the case $M=0$ with a time-out. This could be implemented as a termination condition in GAS.

\section{Conclusion}
 The described financial use case for quantum computing regarding portfolio selection by investor preferences has been successfully implemented by employing the Grover-based algorithms QES and GAS with the oracles, whose explicit forms are given in this paper. To advance this research, one can implement the proposed method on physical quantum hardware to assess its performance in the presence of noise. This approach would also enable the inclusion of larger portfolios by leveraging devices with a higher number of qubits than those currently available in simulators. Nevertheless, the small-scale examples in \cite{QAIFgit} are conceptually viable.
\appendix
\subsection{Theorems} 

\begin{theorem}\label{theorem1}
    The error $\Delta M$ of the quantum counting algorithm is less than $1/2$, if the list size $N$ is large, the number of solutions $M$ is at most $N/2$ and the number of bit accuracy qubits is chosen as 
    \begin{equation*}
        m = \ceil[\bigg]{\log N + \frac{1}{2}}.
    \end{equation*} 
\end{theorem}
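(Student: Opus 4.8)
The plan is to bound $\Delta M$ directly from the quantum-counting error estimate \eqref{error_M} by substituting the prescribed number of bit-accuracy qubits $m = \ceil{\log N + \tfrac12}$ from \eqref{m_other} and invoking the hypothesis $M \leq N/2$. Writing \eqref{error_M} as $\Delta M < 2^{-m}\sqrt{NM} + \tfrac{N}{4}\,2^{-2m}$, the whole argument reduces to controlling the single quantity $2^{-m}$ and then estimating the two resulting terms separately (here and throughout $\log$ denotes the base-two logarithm, consistent with the rest of the paper).

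First I would use the standing convention $N = 2^n$ from Section \ref{subsec:Grover}, so that $\log N = n$ is an integer and the ceiling collapses exactly: $m = \ceil{n + \tfrac12} = n+1$, whence $2^{-m} = 2^{-(n+1)} = \tfrac{1}{2N}$. Feeding this into the first term of \eqref{error_M} gives $2^{-m}\sqrt{NM} = \tfrac12\sqrt{M/N}$, and the assumption $M \leq N/2$ bounds this by $\tfrac{1}{2\sqrt{2}}$. The second term becomes $\tfrac{N}{4}\cdot\tfrac{1}{4N^2} = \tfrac{1}{16N}$. Adding them yields $\Delta M < \tfrac{1}{2\sqrt{2}} + \tfrac{1}{16N}$, and since $\tfrac{1}{2\sqrt{2}} \approx 0.354$, the residual $\tfrac{1}{16N}$ stays below $\tfrac12 - \tfrac{1}{2\sqrt{2}}$; this is where the ``$N$ large'' hypothesis is invoked (though in fact any $N \geq 1$ already suffices).

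The main subtlety, and the step I expect to be the crux, is that one cannot merely use the inequality $m \geq \log N + \tfrac12$. That alone gives only $2^{-m} \leq \tfrac{1}{N\sqrt{2}}$, which makes the first term as large as $\tfrac{1}{\sqrt{2}}\cdot\tfrac{1}{\sqrt{2}} = \tfrac12$ when $M = N/2$; together with the strictly positive second term this would violate the claim. The improvement comes entirely from the integer structure $N = 2^n$: the ceiling rounds $n + \tfrac12$ up to $n+1$, shrinking $2^{-m}$ from the generic bound $\tfrac{1}{N\sqrt{2}}$ down to $\tfrac{1}{2N}$, and this extra factor is precisely what pushes the first term strictly below $\tfrac12$. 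I would therefore make the rounding step explicit before carrying out the elementary estimates, so that the role of the power-of-two list size is transparent.
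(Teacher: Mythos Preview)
Your argument is correct and cleaner than the paper's. You go \emph{forward}: exploit the standing convention $N=2^n$ from Section~\ref{subsec:Grover} to evaluate the ceiling exactly as $m=n+1$, substitute $2^{-m}=1/(2N)$ into \eqref{error_M}, and bound the two terms by $1/(2\sqrt{2})$ and $1/(16N)$. The paper instead works \emph{backward}: it sets $\Delta M<\varepsilon$, completes the square in $f=2^{-m}\sqrt{N}$, rationalises $\sqrt{M+\varepsilon}-\sqrt{M}$, pushes $M$ up to $N/2$, and finally invokes the large-$N$ expansion $\sqrt{1+1/N}-1\approx 1/(2N)$ to extract $m\geq\log N+\tfrac12$. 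What your route buys is exactness: once $N$ is a power of two the ceiling jumps to $n+1$ and no asymptotic approximation is needed (indeed, as you note, $1/(16N)<\tfrac12-\tfrac{1}{2\sqrt2}$ already for $N\geq1$, so the ``$N$ large'' hypothesis becomes superfluous). What the paper's route buys is that the derivation of \eqref{m_M_rel1} does not assume $N=2^n$ and, more importantly, is reusable: the same intermediate inequality \eqref{m_M_rel1} is recycled with a different choice of $\varepsilon$ in the proof of Theorem~\ref{theorem2}. Your observation that the naive bound $2^{-m}\leq 1/(N\sqrt2)$ is \emph{not} enough, and that the integer rounding is doing real work, is exactly the right diagnosis of why the constant $\tfrac12$ in $m=\lceil\log N+\tfrac12\rceil$ is sharp for this argument.
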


\begin{proof}
    Suppose we want to bound $\Delta M$ by $\varepsilon$, i.e. $\Delta M < \varepsilon$. Then, \eqref{error_M} implies
\begin{equation}\label{deriv_2}
    2^{-m}\sqrt{MN} + \frac{N}{4}2^{-2m} \leq \varepsilon.   
\end{equation}
After defining
\begin{equation}\label{deriv_f}
    f(m,N) \coloneqq 2^{-m}\sqrt{N},
\end{equation}
from \eqref{deriv_2} we get
\begin{alignat}{2}
    &&2^{-m}\sqrt{MN} + \frac{N}{4}2^{-2m} \leq \varepsilon \nonumber \\
    &\Leftrightarrow & 4\sqrt{M} f + f^2 \leq 4\varepsilon \nonumber \\
    &\Leftrightarrow &  (f + 2 \sqrt{M})^2 \leq 4(\varepsilon + M) \nonumber \\
    &\Rightarrow &      f + 2 \sqrt{M} \leq 2\sqrt{\varepsilon + M} \nonumber \\
    &\Rightarrow & 2^{m+1} \geq \sqrt{N} \frac{\sqrt{M + \varepsilon} + \sqrt{M} }{\varepsilon}\label{m_M_rel1}
\end{alignat}
where in the last line we have used
\begin{equation*}
    2\sqrt{\varepsilon + M} - 2\sqrt{M} = \frac{2\varepsilon}{\sqrt{M+\varepsilon} + \sqrt{M}}
\end{equation*}
and plugged \eqref{deriv_f} back in.
Plugging $\varepsilon = 1/2$ into \eqref{m_M_rel1} gives
\begin{alignat}{2}
    && 2^{m+1} \geq \sqrt{2N}(\sqrt{2M+1} + \sqrt{2M}) \nonumber\\
    &\Rightarrow& 2^{m+1} \geq \sqrt{2N}(\sqrt{N+1} + \sqrt{N}) \nonumber\\
    &\Leftrightarrow& 2^{m+1} \geq \frac{\sqrt{2}}{\sqrt{1+1/N}-1} \nonumber\\
    &\Rightarrow& m \geq - \frac{1}{2} - \log(\sqrt{1+1/N}-1),\label{deriv_3}
\end{alignat}
where in the second line we have used $M\leq N/2$.
Assuming $N\gg1$, we have
\begin{equation}
    \log(\sqrt{1+1/N}-1) \approx \log(1/2N).\label{deriv_4}
\end{equation}
Combining \eqref{deriv_3} with \eqref{deriv_4} gives 
\begin{equation*}
    m\geq\log N + \frac{1}{2},
\end{equation*}
which is the desired result.
\end{proof}

\begin{theorem}\label{theorem2}
   Quantum counting algorithm checks whether there is no solution, a single solution, or multiple solutions if the number of bit accuracy qubits is chosen as 
    \begin{equation*}
        m = \ceil[\bigg]{\frac{\log N}{2} + 1.583}.
    \end{equation*} 
\end{theorem}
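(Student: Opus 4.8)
The plan is to mirror the argument of the proof of Theorem~\ref{theorem1}, replacing its goal of recovering the \emph{exact} integer $M$ by the weaker goal of correctly sorting the estimate into the three bins $\{0\}$, $\{1\}$, and $\{\,\ge 2\,\}$. As in that proof, I would read off the estimate $\widetilde M$ delivered by quantum counting and round it to the nearest integer; the theorem then reduces to showing that this rounded value lands in the right bin in each of the cases $M=0$, $M=1$, and $M\ge 2$. For this it suffices that the estimate never wrongly crosses the two decision thresholds $1/2$ (separating ``no solution'' from ``single'') and $3/2$ (separating ``single'' from ``multiple''), which amounts to controlling $\Delta M$ via \eqref{error_M} at the relevant values of $M$.

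First I would locate the most restrictive of these requirements. The case $M=0$ only needs $\widetilde M<1/2$, i.e. $\tfrac{N}{4}2^{-2m}\le 1/2$ by \eqref{error_M}, which is easily satisfied. The decisive constraint comes from the upper threshold $3/2$: to guarantee that a two-solution instance is reported as ``multiple'', the estimate for $M=2$ must exceed $3/2$, i.e. I need $\Delta M<1/2$ at $M=2$. I would therefore substitute $\varepsilon=1/2$ and $M=2$ directly into the inequality \eqref{m_M_rel1} already established in the proof of Theorem~\ref{theorem1}, obtaining
\begin{equation*}
    2^{m+1}\ge 2\sqrt{N}\left(\sqrt{\tfrac52}+\sqrt{2}\right),
\end{equation*}
and then take base-two logarithms to reach $m\ge \tfrac{\log N}{2}+\log\!\big(\sqrt{5/2}+\sqrt2\big)=\tfrac{\log N}{2}+1.583$. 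Taking the ceiling reproduces exactly the stated value of $m$; note that, unlike Theorem~\ref{theorem1}, no large-$N$ approximation is needed here, since this bound is exact.

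It then remains to confirm that no other admissible $M$ imposes a stronger condition. Since the right-hand side of \eqref{m_M_rel1} is increasing in $M$ for fixed $\varepsilon$, the cases $M=0,1$ are less demanding than $M=2$. For $M>2$ the estimate must only clear the threshold $3/2$, i.e. $\Delta M<M-3/2$; writing $x\coloneqq 2^{-m}\sqrt{N}$ and invoking \eqref{error_M}, this is the inequality $x\sqrt{M}+\tfrac{x^2}{4}\le M-\tfrac32$, whose slack $M-\tfrac32-x\sqrt{M}-\tfrac{x^2}{4}$ has $M$-derivative $1-\tfrac{x}{2\sqrt M}>0$ for the small fixed $x\approx 0.334$ and is nonnegative at $M=2$. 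Hence the slack stays nonnegative up to $M\le N/2$, so the $M=2$ case is genuinely binding and the chosen $m$ works for every admissible number of solutions.

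The main obstacle I anticipate is not the algebra, which is a one-line specialisation of the Theorem~\ref{theorem1} computation, but the bookkeeping of the decision thresholds: one must argue carefully that among the three bins the only tight boundary is single-versus-multiple at $M=2$, and in particular that the error $\Delta M$, which grows like $\sqrt{M}$, does not reintroduce a misclassification for large $M$. That monotonicity check in the final paragraph is the crux of the argument.
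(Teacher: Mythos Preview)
Your proposal is correct and follows essentially the same approach as the paper: both identify the binding constraint as the boundary between ``single'' and ``multiple'' at $M=2$ with tolerance $\varepsilon=1/2$, plug this into \eqref{m_M_rel1} to obtain $m\ge\tfrac{\log N}{2}+\log(\sqrt{5/2}+\sqrt2)$, and then verify that all other admissible $M$ are less demanding. The only cosmetic difference is that the paper parametrises by $\varepsilon=M-3/2$ for all $M\ge 2$ and asserts that the resulting right-hand side of \eqref{m_M_rel1} is maximised at $M=2$, whereas you single out $M=2$ first and then confirm the $M>2$ cases via an explicit derivative check; your monotonicity argument is in fact more detailed than the paper's one-line claim.
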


\begin{proof}
   Choosing
   \begin{equation*}
       \varepsilon=M-3/2
   \end{equation*}
    for all $M \in \{2,3,\dots,N/2\}$ and plugging $\varepsilon$ into \eqref{m_M_rel1} gives
\begin{equation*}
    2^{m+1} \geq \sqrt{N}\frac{\sqrt{2M-3/2}+\sqrt{M}}{M - 3/2}.
\end{equation*}
The right-hand side is maximized at $M=2$ yielding
\begin{alignat*}{2}
    && 2^{m+1} \geq \sqrt{N}\frac{\sqrt{5/2}+\sqrt{2}}{1/2} \\
    &\Rightarrow& m \geq - \frac{1}{2} + \frac{1}{2} \log N + \log(2+\sqrt{5}).
\end{alignat*}
Since $\log(2+\sqrt{5})-1/2 \approx 1.583$, we can choose
    \begin{equation}\label{m_deriv}
        m = \ceil[\bigg]{\frac{\log N}{2} + 1.583}.
    \end{equation}
    For the cases $M \in \{0,1\}$, we need to choose $\varepsilon=1/2$, but those cases are already covered by \eqref{m_deriv}.
\end{proof}
\printbibliography

\end{document}